\documentclass[11pt]{article}
\usepackage[T1]{fontenc}
\usepackage{lmodern}
\usepackage[margin=1in]{geometry}
\usepackage{setspace}
\usepackage{amsmath,amssymb,amsthm}
\usepackage{graphicx}
\usepackage{xcolor}
\usepackage{booktabs}
\usepackage{float}
\usepackage{caption}
\usepackage{hyperref}
\hypersetup{
  colorlinks=true,
  linkcolor=black,
  citecolor=black,
  urlcolor=black
}

\usepackage[authoryear,round]{natbib}

\newtheorem{proposition}{Proposition}
\newtheorem{lemma}{Lemma}
\newtheorem{corollary}{Corollary}

\DeclareMathOperator{\E}{\mathbb E}
\DeclareMathOperator{\Prb}{Pr}
\DeclareMathOperator{\Cov}{Cov}

\title{The Value of Peer Review and the Reward to Reputation\thanks{
Many conversations have informed this paper. I particularly appreciate
thoughtful discussions with Rulof Burger, Helanya Fourie, Willem Fourie, Jesse
Naidoo, Matthew Olckers, Melt van Schoor and Marisa von Fintel, and the many
reader comments on ourlongwalk.com on this and related issues.
I chose the question, assumptions and argument. Anthropic's Claude Code and
OpenAI's Codex assisted with derivations, drafting, literature search,
computational verification and adversarial review. I reviewed every claim and
remain responsible for all errors.}}
\author{Johan Fourie\thanks{Department of Economics, Stellenbosch University.
Email: \href{mailto:johanf@sun.ac.za}{johanf@sun.ac.za}.}}
\date{}

\begin{document}
\maketitle

\begin{abstract}
\noindent Journals cannot referee every paper they receive. When submissions
outrun the time reviewers can give, an editor must decide what to do with the
papers no one reads: reject them, or check them with a faster but less accurate
tool, of the kind artificial intelligence now supplies. A paper the tool passes
carries the same public stamp as one an expert approved, and readers cannot tell
a screened paper from a reviewed one. This paper asks which choice makes the
stamp more valuable. Rejecting the unread
papers keeps the approved pool cleaner, because the faster tool sometimes passes
weak work. Yet rejection can make the stamp worth less: good papers turned away
unread join the pool of unstamped papers and raise the quality against which a
stamped paper is judged. Which effect wins depends on how common good work is,
and the model gives the exact switching point when there is one. The choice also shifts how much a
strong affiliation is worth; and once authors can run the screening tool
themselves, wherever some game it and some do not, the stamp's value is set by
what getting past the screen costs, not by what the screen detects, while expert
judgment allows no such gaming.
\end{abstract}
\vspace{1em}\noindent\textbf{Keywords:} certification; peer review; pedigree;
expert judgment; artificial intelligence

\vspace{0.25em}
\noindent\textbf{JEL codes:} D82; D83; I23; L15; O33

\vspace{0.5em}

\newpage

\section{Introduction}

Some research tasks can be written down as rules, and some cannot. Recent work
on AI divides jobs along this line: codifiable work, which follows rules a
machine can now carry out at scale, and messy work, which still needs a
person's judgment \citep{garicano2026messy}. Generative artificial intelligence
keeps widening the set of research tasks that rules can handle. Writing a
plausible manuscript is now largely rule-based \citep{fourie2026writing}; judging
one is not. Expert judgment has not become cheaper, and economists who
study AI in science treat it as the input that stays scarce
\citep{agrawal2026aiscience}. The change has now reached evaluation itself: the
technical checks a referee once ran by hand can be written as rules, and
journals have begun to adopt the tools that run them. This paper works out what
that second change does to the value of certification, the worth of the stamp a
journal puts on a paper. The two changes share one feature: whatever part of
writing can be reduced to rules, anyone can produce, and whatever part of
evaluation can be reduced to rules, anyone can pass.

Journals were crowded before either change. Yearly submissions to the top five
economics journals nearly doubled between 1990 and 2012 while those journals
published fewer articles, so acceptance rates fell from about 15 per cent to
about 6 per cent \citep{card2013nine}. What limits a journal is referee time,
and that time comes at a price which does not clear the market
\citep{engers1998referees}. That time is also shared very unevenly: across the
biomedical literature a small share of researchers carry most of the reviewing
\citep{kovanis2016burden}. AI has now added volume on top of the shortage:
after ChatGPT appeared, submissions to \emph{Organization Science} rose sharply
while measured writing quality fell, both in the submissions and in the reviews
\citep{gartenberg2026more}. The crowding extends beyond journals as well: in weeks
when more working papers are released, each one receives fewer downloads and less
attention, and is later published and cited less \citep{lusher2023congestion}.
A journal that cannot read every submission must decide what to do with the
papers its referees do not reach: reject them unread, or check them with a
faster but less accurate tool, of the kind AI now supplies. This paper asks one
main question. Which choice keeps the journal's stamp more valuable? The natural
guess is that rejecting the unread papers protects the stamp, because it
withholds approval from everything the journal has not checked. That guess can be
wrong. A stamp is valued by comparison: a reader prices a stamped paper against
the papers that have no stamp, so anything that raises the quality of the
unstamped papers lowers what the stamp is worth. Good papers rejected only
because no referee reached them are papers of exactly this kind. Turned away
unread, they join the unstamped papers and raise their average quality, and the
stamp that excluded them is then worth less.

Two further questions follow from the same choice, and the paper answers them
too. What does each rule do to the worth of an author's pedigree, the track
record readers can see? And what is the fast tool's verdict worth once authors
can run the tool themselves? Existing work on certification studies certifiers
who choose what to reveal about the work they have examined. A crowded certifier
faces an earlier choice, what to examine at all, and that work has treated the
screening tool behind the choice as fixed equipment, not as something the
screened authors also hold. Those are the two gaps this paper fills.
\citet{fourie2026messy} applies the codifiable--messy division to certification
and argues that a journal's stamp survives only if the journal commits to
rejecting what it cannot read. The model below shows that this commitment is not
always the right one.

I answer all three questions in a deliberately small model. There is a fixed
batch of papers, each either high or low quality, where high quality means the
paper would pass a careful expert review. The journal's expert reviewers reach
only a random fraction of submissions, say one in four; this intensive review,
when it happens, judges quality without error. The journal chooses what happens
to the papers it does not reach. Under \emph{strict verification} it rejects
them. Under \emph{scalable evaluation} it hands them to an imperfect tool that
passes a high-quality paper with probability $\alpha$, say three in four, and a
low-quality paper with probability $\beta$, say one in five. The two tools
mirror the two kinds of work. Intensive review is the messy input, the judgment
that does not scale; the scalable screen is the part of evaluation that has been
reduced to rules. Both routes end in the same public stamp, so readers see that
a paper is approved but not how. A competitive market pays each paper what it is
worth on average given what the market can see, which is the quality a rational
reader infers from the presence or absence of the stamp. The value of the stamp
is the premium: the gap between the price of a stamped paper and the price of an
unstamped one, and also the most an author would pay to display it. Last, I let
authors fall into two visible pedigree groups, one likelier than the other to
produce high-quality work. Review stays blind to the group, but readers see both
the stamp and which group the author is in.

The precise answer is a single crossing. Suppose the screen lets some bad work
through ($\beta>0$) and still tells good work from bad to some degree
($\alpha>\beta$). Then, as good papers become a larger share of the batch, the
two stamps' values cross at most once, and the crossing point has a formula. When
that crossing lies strictly between zero and one, strict verification gives the
more valuable stamp while good work is scarce, and scalable evaluation gives it
once good work is common. The screen can also be too weak for any crossing to
occur: if it passes too few good papers relative to the bad ones it lets through,
strict verification wins at every share. With the running numbers, one paper in
four reviewed and the screen passing three good papers in four and one bad paper
in five, the crossing is at a good-work share of 0.410. At every share, though,
strict verification leaves the higher average quality among stamped papers. A
cleaner stamp and a more valuable stamp can therefore point to opposite rules.
That gap is the paper's central finding.

The second question, about pedigree, has its own threshold. Here the comparison turns on one summary of the two groups: a
midpoint between how likely each group is to produce good work, defined precisely
in Section 5. When that midpoint lies below the threshold, strict verification
produces the wider gap between the two groups' prices among unstamped papers;
when it lies above, scalable evaluation produces the wider gap. Because the two
thresholds are separate numbers, a pair of groups can fall on one side of the
first and the other side of the second. And when the two groups lie on opposite
sides of the value crossing, they value the stamp more under opposite rules: no
single rule gives both groups their more valuable stamp.

The third question makes the screen's error rates something the model
determines. Section 6 gives authors the same tool the journal
screens with, so an author can pay to fix the rule-checkable flaws the screen
would catch. Fixing them changes what the screen sees, not what the paper
actually contributes, so strict verification is untouched: expert judgment still
reads the quality that no fix reaches. Under scalable evaluation, in every equilibrium
where some flagged authors pay to fix their papers and some do not, the value of
the stamp is pinned down: it equals the fee an author pays to show the stamp plus
the cost of fixing divided by the chance $1-p$ of being sent to the screen. The
stamp is then priced by what getting past the screen costs, not by what the
screen can detect. And if fixing is cheap enough, every paper the screen reads
passes, and the comparison reduces to a simple benchmark: strict verification
gives the more valuable stamp exactly when good papers are fewer than half.

The same comparison explains the pedigree result. An unstamped paper is weaker
evidence of low quality under strict verification, which leaves the overflow
unread, than under scalable evaluation, which screens it. So a missing stamp
lowers an author's price by more under screening, and it does so for both groups.
What this does to the gap between the groups depends on where they start. When
both groups are unlikely to produce good work, the extra bad news lowers both
prices towards zero and narrows the gap between them; when both are likely to, it
widens the gap. A single threshold separates the two cases. This is a statement
about prices, not a claim that either rule changes how much a group's pedigree
tells a reader about quality.

The paper contributes to three literatures: the economics of science, the
theory of certification, and the economics of AI and the division of labour.

The first contribution connects certification to how attention and status are
shared out in science. \citet{hanson2024strain} show that the number of indexed
articles has outrun the researchers available to vet them, and warn that the
resulting strain can blur what a journal's name signals. I take that shortage as
given and work out how the rule for the overflow changes two things at once: what
the signal is worth, and what a strong reputation is worth. Science studies has
long found that how a paper is judged depends in part on who wrote it
\citep{lee2013bias}: an author's prominence sways peer assessments in a
controlled experiment \citep{huber2022nobel}, editors give more weight to reports
from heavily published referees \citep{card2020editors}, and reputation even
shifts how credit is split between teams that reach near-identical results
\citep{hill2025scooped}. Publishing in a handful of top journals strongly
predicts tenure in economics, and the standing of the department where an author
trained predicts where they are hired
\citep{heckman2020tyranny,clauset2015hierarchy}, so what the stamp is worth, and
who can earn it, are questions about careers as much as prices. A second kind of
sorting runs through money: authors at less-resourced institutions and in poorer
countries cluster in lower-fee journals, while better-resourced institutions
publish where the charges are higher
\citep{oldford2023marginalizing,klebel2023apcbarrier}. The sorting studied here
needs no money at all. It survives even when submitting is free and review is
blind. Journals have already used blindness as a lever: the \emph{American
Economic Review} ran a double-blind experiment and found that hiding the author's
name lowered acceptance rates for some groups \citep{blank1991effects}. Even so,
blindness does not settle the matter. When the evaluator never sees who wrote a
paper, the rule for the papers it cannot reach still changes the market reward to
reputation, because it changes what a missing stamp implies. Experiments with AI
evaluations of economics papers show that the tool itself reacts to affiliation,
prominence and gender \citep{pataranutaporn2025peerreview}; the blind-screen case
here isolates a separate, indirect route that works even when the evaluator is
not told who the author is.

The second contribution is to the economics of certification. The closest
predecessor is \citet{farhi2013fear}, where a rejection from a choosy certifier
is itself bad news: every rejected paper was read and found wanting, so rejection
lowers the standing of those turned away and worsens the pool of the rejected.
Rationing by capacity turns this around. A paper rejected only because the
journal ran out of reviewers is not bad news about the paper, because no one read
it. Rejection without reading produces mistakes that carry no stigma, and it can
raise rather than lower the quality of the unstamped pool. That reversal is what
lets a cleaner stamp be worth less, and it cannot happen when every rejection is
informative. The paper adds a second result of the same kind: the tool a
certifier screens with is itself shaped by behaviour. Once the authors being
screened hold that tool, its price shapes the screen's error rates, and where
some game the screen and some do not, the stamp's value is pinned by the cost of
passing it rather than by anything the certifier chose.

A broad literature studies certifiers that choose what to tell the market: what
to disclose, what to charge, how open to be, and how many grades to give
\citep{lizzeri1999information,stahl2017certification,zapechelnyuk2020optimal,
pollrich2024irrelevance}. Work on grading makes a related point: any one grade is
read against the other grades the same scheme hands out
\citep{dranove2010quality,daley2014grades,harbaugh2018coarse}. In another line of
work the standard itself is set by behaviour, either as a norm about what referees
demand that shifts slowly over time \citep{ellison2002evolving}, or as a journal's
best response to its own aims, which can make deliberately lax review worthwhile
\citep{zollman2024journals}. A further line asks how an editor should choose among
referees and combine their reports \citep{bayar2021editorial}. I study a different
margin. When review capacity is spread too thin, what happens to the papers that
get no report, and how does that choice change the price the market puts on a
shared stamp? A separate literature studies authors who game a fixed scoring rule,
and the blurred information that results
\citep{frankel2019muddled,ball2025scoring}; there the question is the score itself
and how the designer should answer the gaming. Here both tools are held fixed,
there is a single public stamp, and I do not model the journal's aims; Section 6
then lets authors game the screen, and that gaming changes which rule gives the
more valuable stamp. What is new here is the cost that rationing imposes on a
stamp's standing: the model sets the two rules against each other for the same
shortage of expert time, and finds the exact share of good work at which the more
valuable stamp switches from one rule to the other.

The third contribution is to the economics of AI and the division of labour.
There the codifiable--messy split sorts tasks: rule-based work is now done by
machines at scale, and messy work is still done by people
\citep{garicano2026messy}. The first of the two changes is already documented.
Employers on a large freelancing platform stopped paying extra for well-written
applications once large language models arrived, because good writing no longer
told strong applicants from weak ones \citep{galdin2025making}, and scientific
output now grows fastest among the authors who used to pay the highest price for
language, those writing outside their first language
\citep{kusumegi2025production}. This paper asks what the same split does to the
institutions that certify the work. Turning an evaluation task into rules changes
more than who does it. A check a journal can run by rule at scale is a check an
author can pass by rule at will, so the very step that lets a journal screen its
overflow lets the screened authors pass the screen. The line between rule-based
and messy work therefore shows up in prices. When some authors game a rule-based
screen and others do not, the value of the stamp it grants is set by what getting
past the screen costs; the premium of a stamp built on judgment answers to no
such cost, because judgment stays messy. In the division-of-labour
literature the messy remainder is the work machines leave to people. Here it also
decides what sets a stamp's value once everyone owns the tools: the value of a
stamp built on rule-based checks comes to depend on how authors respond to the
tool rather than on what the tool detects, while a stamp built on judgment keeps
the value the fundamentals give it. So the line does a second job: it sorts tasks
by who does them, and it sorts stamps by what fixes their value once the tools are
in every author's hands.

All three results inform a live debate about what AI does to the making of
knowledge. Algorithms are expected to reorganise scientific work, not just speed
it up \citep{mullainathan2025algorithms}, and formal models show that when
machine output stands in for human effort, the shared store of knowledge can
shrink even as each person's decisions improve \citep{acemoglu2026collapse}. Two
findings from nearby settings strengthen the concern. Competition lowers measured
quality when the evaluation is rough \citep{hill2025race}, and applicants who can
use generative AI are harder for employers and investors to judge
\citep{cowgill2026does}. Editorial discretion, by contrast, changes what gets
published less than is often assumed \citep{krieger2025gatekeeping}. The debate
has so far said little about certification itself, which is how science turns
private judgment into a public signal. The results below show that a crowded
certifier choosing between the two rules faces a trade-off with no clear winner:
the rule it picks sets both how much a stamp is worth and how much an author's
affiliation is worth to those without one. Recent models of AI and publishing
study richer problems, including submission volume, review costs, capacity and
fees \citep{atasu2026commons,lopezlira2026prompt}, and how to redesign review
when authors and reviewers both use AI \citep{hakobyan2026monitor}.
\citet{gans2026designing} studies the same tools on the other margin, inside
expert review: which referees, if any, should see a journal-commissioned AI
assessment. There too the tool's worth is settled in equilibrium, since a report
that makes each referee more accurate can leave the editor less informed,
because referees who read the same report make the same mistakes. This paper
isolates one piece those problems can build on: the price the market puts on a
shared stamp, and what that price becomes once authors hold the screening tool
themselves.

The rest of the paper runs as follows. Section 2 sets out the model. Section 3
works out the value of a stamp and shows that it equals what an author would pay
for one. Section 4 compares the two rules and states the main result. Section 5
asks whose prices the choice of rule falls on, across pedigree groups. Section 6
puts the screen itself in play, letting authors hold the tool it is built from,
and maps artificial intelligence onto the model's inputs. Section 7 concludes.
Appendix A gathers the proofs; Appendix B extends the model to two noisy screens,
targeted expert review, a continuous measure of quality, and authors who choose
whether to submit.

\section{The model}

There is a fixed batch of completed papers, treated as a unit mass. Each paper
$i$ has a hidden quality $q_i\in\{0,1\}$ that matters for certification. Quality
one means the paper would pass a careful expert review; quality zero means it
would not. Write
\[
 \pi=\Prb(q_i=1)\in(0,1)
\]
for the share of high-quality papers.

The model takes as given whatever a reader can already see. It holds fixed the
visible features of a manuscript, such as its topic or how polished the prose
is. So $q$ is the part of quality that only evaluation can reveal: the soundness
or importance not already obvious to the market that will read it. Among papers
that look the same, the journal cannot rank them before evaluating. Sections
2--4 stay inside one such group of look-alike papers; Section 5 compares two
groups that differ in the author's publicly visible pedigree.

The journal announces one of two rules. Under either rule, a paper gets
intensive review with probability $p\in(0,1)$; intensive review passes a
high-quality paper and rejects a low-quality one. How much review there is,
$p$, is taken as given. A journal could in principle buy more, but referees are
paid little or nothing and give their time for reasons a fee does not easily
change, so the amount of intensive review barely responds to what a journal is
willing to spend \citep{engers1998referees}. Prices are nowhere near the level
at which that would even be tested. Among the 600 highest-ranked economics
journals, only 118 charge a submission fee at all, and the median charge among
those is 125 dollars. Posted fees come to about one per cent of a central
estimate of what an acceptance is worth to an author, and the fee that would
clear the queue, along with the referee pay it would fund, sits well above
anything charged today \citep{fourie2026price}. Review is therefore rationed,
not priced. That companion paper prices entry to the queue; this one prices
what the queue produces, the stamp itself. The interesting case is a $p$ below
the level that would clear the queue, and it is the only case in which the two
rules differ at all: at $p=1$ every paper is examined and the choice
disappears. The rules differ in what they do with the overflow, the papers
intensive review does not reach. Under \emph{strict verification}, a paper that
misses intensive review is rejected. Under \emph{scalable evaluation}, a paper
that misses intensive review is passed with probabilities
\[
 \alpha=\Prb(C\mid q=1,\text{ scalable}),\qquad
 \beta=\Prb(C\mid q=0,\text{ scalable}),
 \qquad 1\geq\alpha\geq\beta\geq0.
\]
Here $\alpha$ is the true-positive rate, the chance a good paper passes the
screen, and $\beta$ is the false-positive rate, the chance a bad one does. The
running example has $p=0.25$, $\alpha=0.75$ and $\beta=0.20$: reviewers reach
one paper in four, and the screen passes three good papers in four and one bad
paper in five. The figures use these numbers throughout.

The journal issues one common public stamp: readers see that a paper is
certified, not which route certified it. This pooling is essential to
everything that follows. If the journal revealed the route, readers would price
intensively reviewed and screened papers separately, and the single-stamp
comparison below would not apply.

Submissions and rejections are private. The market sees whether a paper has the
stamp, not whether it applied and was turned down. A competitive market
pays a paper its expected quality given what the market sees: the chance the
paper is high quality, conditional on the stamp being present or absent. An
author who has earned the stamp may display it by paying a fixed, stamp-only
price $f\geq0$, set outside the model. This price is a measuring device for
willingness to pay; it is not a claim that journals should charge for
acceptance. Submitting a finished paper costs nothing in the baseline. The
order of events is: the journal announces its rule and $f$; authors submit; the
journal evaluates; authors who qualify decide whether to display the stamp; and
the market pays each paper its expected quality.

A little notation finishes the setup. For either rule, write
\[
 x_H=\Prb(C\mid q=1),\qquad x_L=\Prb(C\mid q=0),
\]
for the overall chances that a high- and a low-quality paper end up certified,
and let the certified share be
\[
 m=\pi x_H+(1-\pi)x_L.
\]
Write the two market prices as
\[
 P_C=\E[q\mid C],\qquad P_N=\E[q\mid N],
\]
where $N$ is the pool of papers the public sees without a stamp. The premium is
$\Pi=P_C-P_N$, the gain in reputation from holding the stamp. The uncertified pool
$N$ is terminal: the model has a single certification stage, so a paper the
journal rejects for want of capacity stays in $N$ and earns no other publicly
observed stamp. This is what lets a rejected good paper raise $P_N$, the
mechanism behind the results.

\section{The value of a certificate}

Bayes' rule reduces the value of a stamp to one expression.

\begin{lemma}[Certification premium]\label{lem:premium}
For any $0<m<1$,
\begin{equation}
 \Pi(x_H,x_L)=
 \frac{\pi(1-\pi)(x_H-x_L)}{m(1-m)}.
 \label{eq:premium}
\end{equation}
\end{lemma}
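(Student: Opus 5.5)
The plan is to compute the two posterior prices $P_C$ and $P_N$ directly from Bayes' rule and then take their difference over a common denominator. Since $q\in\{0,1\}$, each price equals the posterior probability that $q=1$. The certified share $m=\pi x_H+(1-\pi)x_L$ is the probability of the event $C$, and $1-m$ is the probability of $N$. The hypothesis $0<m<1$ ensures both conditioning events have positive mass, so both prices are well defined.

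First step: write the two prices.
\[
 P_C=\frac{\pi x_H}{m},\qquad P_N=\frac{\pi(1-x_H)}{1-m}.
\]
The second uses the fact that a high-quality paper ends up uncertified with probability $1-x_H$. Here I rely on the model's statement that the uncertified pool is terminal and that the display decision does not separate papers. Strictly, for the premium as defined I only need $P_C=\E[q\mid C]$ and $P_N=\E[q\mid N]$, taking $C$ as the event "certified".

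Second step: combine the two fractions.
\[
 \Pi=\frac{\pi x_H(1-m)-\pi(1-x_H)m}{m(1-m)}=\frac{\pi\,(x_H-m)}{m(1-m)}.
\]
The numerator simplifies because the $\pi x_H m$ terms cancel. Then substitute $x_H-m=x_H-\pi x_H-(1-\pi)x_L=(1-\pi)(x_H-x_L)$. This gives exactly \eqref{eq:premium}.

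There is no real obstacle; the only point needing care is the cancellation $x_H(1-m)-(1-x_H)m=x_H-m$. I would also remark that this form makes the premium visibly nonnegative whenever $x_H\ge x_L$, which holds under both rules since $\alpha\ge\beta$. The lemma is stated as a function of $(x_H,x_L)$, so I would keep everything at that level of generality and not specialise to either rule.
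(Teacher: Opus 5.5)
Your proof is correct and matches the paper's own argument: Bayes' rule gives $P_C=\pi x_H/m$ and $P_N=\pi(1-x_H)/(1-m)$, and subtracting with $m=\pi x_H+(1-\pi)x_L$ yields \eqref{eq:premium}. Your explicit simplification $x_H(1-m)-(1-x_H)m=x_H-m=(1-\pi)(x_H-x_L)$ spells out the step the paper leaves to the reader.
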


Each part of equation \eqref{eq:premium} does a job. The numerator is the prior
variance of quality, $\pi(1-\pi)$, times how sharply the rule separates good
papers from bad, $x_H-x_L$. The prior variance is only one factor: the same $\pi$
also enters the denominator, so a more uncertain prior does not by itself make
the stamp worth more. What the numerator isolates is the rule's separation, which
raises the premium whenever it improves. The denominator captures how the rule
splits the papers between the two pools, stamped and unstamped. This is where the result comes from. A rule that
certifies more papers leaves fewer in the unstamped pool it is judged against,
and it is that change in the comparison pool, not the cleanliness of the stamp,
that can raise or lower the premium. Every paper the rule does not stamp falls
into that comparison pool.

The premium is not just a statistical gap; it is also what an author would pay
for the stamp. Authors know more about their own papers than the market does.
Let author $i$ privately put the chance her paper is high quality at
$\theta_i\in(0,1)$, with these beliefs averaging out to $\E[\theta_i]=\pi$. Her
chance of ending up eligible for the stamp is
\[
 s(\theta_i)=x_L+(x_H-x_L)\theta_i.
\]
The argument has two steps: first work out the premium assuming every author
submits, then check that submitting is indeed what every author wants. The order
matters: the premium is computed under a guess about behaviour, and the lemma
then shows the guess was right.

\begin{lemma}[Private demand]\label{lem:demand}
Fix a regime and calculate $\Pi$ in the all-submission outcome. For every
$f\in[0,\Pi)$, all authors strictly submit and every eligible author strictly
displays the stamp. The supremum price consistent with strict participation and
display is $\Pi$.
\end{lemma}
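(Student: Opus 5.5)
The plan is to fix the regime, compute $P_C$, $P_N$ and $\Pi=P_C-P_N$ under the conjecture that every author submits and every eligible author displays, and then verify that conjecture as a best response for each author type $\theta_i$. Since the market sees only whether a stamp is displayed, an author's payoff depends on her own action only through whether she ends up in the displayed pool $C$ or in $N$. Holding the market prices fixed at their conjectured values keeps the argument a simple equilibrium check. Lemma~\ref{lem:premium} supplies the formula for $\Pi$ but is not needed beyond $\Pi$ being well defined for $0<m<1$.

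\textbf{Display.} Take first the display decision of an author who is already eligible. Displaying yields $P_C-f$ and withholding yields $P_N$, because a withheld stamp is indistinguishable from no stamp; rejections and submissions are private, so no other inference is possible. Hence display is strictly preferred exactly when $f<\Pi$. Note that $\Pi>0$ whenever $x_H>x_L$ and $0<m<1$, by Lemma~\ref{lem:premium}, so the interval $[0,\Pi)$ is nonempty.

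\textbf{Submission.} Given that display is optimal, an author of belief $\theta_i$ who submits earns expected payoff $s(\theta_i)(P_C-f)+(1-s(\theta_i))P_N=P_N+s(\theta_i)(\Pi-f)$. Not submitting earns $P_N$, since she stays in $N$ and submission costs nothing. So she strictly prefers to submit iff $s(\theta_i)(\Pi-f)>0$.

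The main obstacle is to ensure $s(\theta_i)>0$ for all $\theta_i\in(0,1)$. This holds because $s(\theta_i)\ge x_H\theta_i>0$ as soon as $x_H>0$, and $x_H\ge p>0$ under either regime since intensive review passes every good paper it reaches. With $f<\Pi$, submission is therefore strict for every type, which confirms the conjectured all-submission, all-display outcome.

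\textbf{Supremum.} For $f\ge\Pi$, display is no longer strictly preferred: at $f=\Pi$ the author is indifferent, and above it she strictly withholds. Hence the set of prices supporting strict participation and display is exactly $[0,\Pi)$, whose supremum is $\Pi$. A final remark addresses off-path beliefs. Because deviations (not submitting, not displaying) land the author in the pool $N$, whose price is already pinned by the conjectured outcome, no off-path belief specification is needed.
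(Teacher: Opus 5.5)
Your proof is correct and follows essentially the same route as the paper. Display is optimal by comparing $P_C-f$ with $P_N$, not submitting also yields $P_N$ because applications are private, submission therefore has gain $s(\theta_i)(\Pi-f)$, positivity comes from $s(\theta_i)\ge x_H\theta_i\ge p\theta_i>0$, and at $f=\Pi$ the choices become weak, which makes $\Pi$ the supremum. Your bound is a compact version of the paper's observation that $s(\theta_i)=p\theta_i$ under strict review and is weakly larger under scalable evaluation.
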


An author who earns the stamp gains $P_C-f-P_N=\Pi-f$ from
showing it, and she reaches that point with probability $s(\theta_i)$, so
submitting has expected gain $s(\theta_i)(\Pi-f)>0$. The display price is a
measuring device: it turns the premium into demand for the stamp without our
having to model the journal's own aims. Whichever rule gives the larger premium
is the rule whose stamp authors value more.

\section{Strict versus scalable evaluation}

Section 3 turned the value of the stamp into one formula. That formula now
answers the main question: when a journal cannot read everything, does rejecting
the overflow or screening it keep the stamp more valuable? We compute the two
premiums and compare them. Throughout, a superscript $R$ marks strict
verification, which rejects the overflow, and a superscript $S$ marks scalable
evaluation, which screens it. Under strict verification a good
paper is certified only if intensive review reaches it, and no bad paper is ever
certified, so $x_H^R=p$ and $x_L^R=0$. Hence
\begin{equation}
 \Pi_R=\frac{1-\pi}{1-\pi p}.
 \label{eq:strict}
\end{equation}
For scalable evaluation define
\[
 c=(1-p)\beta,\qquad
 d=p+(1-p)(\alpha-\beta),
\]
where $c$ is a bad paper's overall chance of being certified and $d$ is how much
higher a good paper's chance is. Then $x_L^S=c$, $x_H^S=c+d$, and
\begin{equation}
 \Pi_S=
 \frac{\pi(1-\pi)d}
 {(c+\pi d)(1-c-\pi d)}.
 \label{eq:scale}
\end{equation}

The main result says when each rule gives the more valuable stamp. Its one
condition, interior pools, asks only that both pools, stamped and unstamped,
contain some papers, so that both prices are well defined.

\begin{proposition}[The value of strictness]\label{prop:strictness}
For interior certified and uncertified pools,
\begin{equation}
 \operatorname{sign}(\Pi_R-\Pi_S)
 =\operatorname{sign}K(\pi;p,\alpha,\beta),
 \label{eq:sign}
\end{equation}
where
\begin{equation}
 K(\pi;p,\alpha,\beta)
 =c(1-c)-2cd\pi-d(d-p)\pi^2.
 \label{eq:K}
\end{equation}
If $\beta>0$ and $\alpha>\beta$, $K$ is strictly decreasing in $\pi$,
so the premiums cross at most once. If the positive root is interior, it is
\begin{equation}
 \pi^*=\frac{c(1-c)}
 {\sqrt{c^2d^2+d(d-p)c(1-c)}+cd}.
 \label{eq:root}
\end{equation}
Strict verification has the larger premium below $\pi^*$; scalable evaluation
has the larger premium above it. If $\beta=0<\alpha$, scalable evaluation has
the larger premium for every $\pi\in(0,1)$.
\end{proposition}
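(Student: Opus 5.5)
The plan is to apply Lemma~\ref{lem:premium} to each rule and reduce the sign of $\Pi_R-\Pi_S$ to the sign of a polynomial in $\pi$. Under strict verification the premium is \eqref{eq:strict}. Under scalable evaluation it is \eqref{eq:scale}, with $m_S=c+\pi d$. Interiority of both pools gives $0<m_S<1$, and $p<1$ gives $0<m_R=\pi p<1$, so both expressions are well defined. I will write
\[
\Pi_R-\Pi_S=(1-\pi)\Bigl[\frac{1}{1-\pi p}-\frac{\pi d}{m_S(1-m_S)}\Bigr],
\]
then multiply through by the positive quantity $(1-\pi p)\,m_S(1-m_S)/(1-\pi)$. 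This leaves the sign of
\[
m_S(1-m_S)-\pi d(1-\pi p).
\]

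Next I expand $(c+\pi d)(1-c-\pi d)$. The linear terms $\pi d(1-c)-\pi d$ collapse to $-c d\pi$, which adds to the cross term $-cd\pi$ to give $-2cd\pi$. The quadratic terms combine to $-d^2\pi^2+dp\pi^2=-d(d-p)\pi^2$. This yields exactly $K$ in \eqref{eq:K} and establishes \eqref{eq:sign}.

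For the single crossing, I note that
\[
d-p=(1-p)(\alpha-\beta).
\]
When $\alpha>\beta$ this gives $d>p>0$ and $d-p>0$, and when $\beta>0$ it gives $c>0$. Then $K'(\pi)=-2cd-2d(d-p)\pi<0$ for $\pi\ge 0$, so $K$ is strictly decreasing and has at most one root in $(0,1)$. Since $K(0)=c(1-c)>0$ (because $c\le 1-p<1$), $K$ is positive below any root and negative above it. This is the claimed ordering of the premiums.

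To get \eqref{eq:root} I solve the quadratic $d(d-p)\pi^2+2cd\pi-c(1-c)=0$. Its positive root is
\[
\pi=\frac{-cd+\sqrt{c^2d^2+d(d-p)c(1-c)}}{d(d-p)}.
\]
I then rationalize by multiplying numerator and denominator by $cd+\sqrt{\cdot}$. The numerator becomes $d(d-p)c(1-c)$, and after cancelling $d(d-p)$ this gives \eqref{eq:root}. The rationalized form also covers the degenerate linear case $d=p$, although $\alpha>\beta$ excludes that case.

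For the case $\beta=0<\alpha$, we have $c=0$ and $d-p=(1-p)\alpha>0$. Then $K=-d(d-p)\pi^2<0$ on $(0,1)$, so $\Pi_S>\Pi_R$ throughout. Interiority holds here automatically, since $m_S=\pi d\in(0,1)$.

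The only step I expect to need care is the bookkeeping in the expansion, in particular keeping the sign of the positive factor $(1-\pi)/(1-\pi p)$ straight and checking that interiority guarantees $m_S(1-m_S)>0$. The rest is routine algebra.
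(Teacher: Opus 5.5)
Your proposal is correct and follows the paper's own proof: you substitute \eqref{eq:strict} and \eqref{eq:scale}, clear the positive denominators to isolate $K$, use $d-p=(1-p)(\alpha-\beta)>0$ and $c>0$ to get $K_\pi<0$, rationalise the positive root, and note that $c=0$ when $\beta=0$. Your explicit use of $K(0)=c(1-c)>0$ to fix which rule wins below the root is the same fact the paper invokes in its attainability remark, so nothing is missing.
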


In plain terms: which rule wins depends only on the sign of a quadratic in the
share of good work, and that quadratic falls as good work becomes more common.
When its root sits inside the unit interval, strict verification wins while good
work is scarce (low $\pi$), scalable evaluation wins once it is common (high
$\pi$), and the switch happens exactly once, at $\pi^*$. The running example
puts numbers on the switch. The crossing is at $\pi^*=0.410$. When one paper in
five is good, the strict premium is 0.842 and the scalable premium 0.523. At the
crossing the two are equal, at 0.657. When three papers in five are good, the
strict premium is 0.471 and the scalable premium 0.642.

Even when $\beta>0$ and $\alpha>\beta$, the crossing may lie outside the range.
Because $K(0)>0$ and $K$ falls, the positive root is below one exactly when
\begin{equation}
 \alpha[p+(1-p)\alpha]>(1-p)\beta.
 \label{eq:attainability}
\end{equation}
If this fails, strict verification has the larger premium at every share of good
work.

The case with no false positives has a clean interpretation from information
theory. One tool is said to Blackwell-dominate another when it is more
informative for every user who might rely on it, whatever the decision. When
$\beta=0<\alpha$, the screen certifies no bad paper, just as strict verification
does, and certifies more good ones, so it Blackwell-dominates strict review: it
is better on every count. When $\beta>0$ and $\alpha>\beta$, neither tool wins
outright: strict review is better at keeping bad papers out, while the screen is
better at letting good ones in. Since neither is more informative for every
user, the two are not ranked by Blackwell's criterion, and the share of good
work can reverse the ranking.

The result is easiest to see from the two pools. Strict verification never
certifies a bad paper, so the stamped pool is pure, $P_C^R=1$. But a good paper
that misses expert review joins the unstamped pool. When good papers are common,
this raises the unstamped price $P_N^R$ by enough to reduce the premium $\Pi_R$. Scalable evaluation lets some bad papers through, which lowers the
stamped price $P_C^S$, yet it can certify enough good papers that the unstamped
price $P_N^S$ falls by more than $P_C^S$ does. The scalable stamp is then
noisier but worth more.

How the crossing changes with the screen's quality follows the same logic.

\begin{corollary}[Performance of scalable evaluation]\label{cor:performance}
Suppose $\pi^*$ is interior. Holding $p$ and $\beta$ fixed,
$\partial\pi^*/\partial\alpha<0$. Holding $p$ and $\alpha$ fixed,
$\partial\pi^*/\partial\beta>0$.
\end{corollary}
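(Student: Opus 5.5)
The plan is to treat $\pi^*$ as an implicit function of $(\alpha,\beta)$ defined by $K(\pi^*;p,\alpha,\beta)=0$ and to use the implicit function theorem. By Proposition~\ref{prop:strictness}, when $\beta>0$ and $\alpha>\beta$ we have $K_\pi<0$, so
\[
 \frac{\partial\pi^*}{\partial\alpha}=-\frac{K_\alpha}{K_\pi},\qquad
 \frac{\partial\pi^*}{\partial\beta}=-\frac{K_\beta}{K_\pi}.
\]
Hence $\operatorname{sign}(\partial\pi^*/\partial\alpha)=\operatorname{sign}K_\alpha$ and $\operatorname{sign}(\partial\pi^*/\partial\beta)=\operatorname{sign}K_\beta$, both evaluated at $\pi^*$. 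The corollary therefore reduces to showing $K_\alpha<0$ and $K_\beta>0$ at the root. The parameters enter $K$ only through $c=(1-p)\beta$ and $d=p+(1-p)(\alpha-\beta)$, so I will work with the partial derivatives $K_c$ and $K_d$ and the chain rule.

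For $\alpha$, only $d$ moves, with $\partial d/\partial\alpha=1-p>0$. From \eqref{eq:K},
\[
 K_d=-2c\pi-(2d-p)\pi^2 .
\]
Since $\alpha\ge\beta$ gives $d\ge p$, we have $2d-p\ge d>0$, so $K_d<0$ for every $\pi\in(0,1)$. Hence $K_\alpha=(1-p)K_d<0$ and $\partial\pi^*/\partial\alpha<0$. This step is routine.

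For $\beta$, both arguments move: $\partial c/\partial\beta=1-p$ and $\partial d/\partial\beta=-(1-p)$. So $K_\beta=(1-p)(K_c-K_d)$, and with $K_c=1-2c-2d\pi$ we get
\[
 \frac{K_\beta}{1-p}=1-2m+2\pi m-p\pi^2=(1-\pi)(1-2m)+\pi(1-p\pi),
\]
where $m=c+d\pi$ is the certified share under scalable evaluation. This is clearly positive when $m\le\tfrac12$, but not automatically otherwise. Here I will use the root condition. Rearranging $K(\pi^*)=0$ gives the identity
\[
 m^2=c+dp\pi^{*2},
\]
and I plan to substitute it to eliminate $p\pi^{*2}=(m^2-c)/d$. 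Combined with $d\ge p$, $c\ge0$, $m<1$ and $m\ge d\pi^*$, this should give a bound that makes $1-2m+2\pi^* m-p\pi^{*2}>0$.

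This sign check is the main obstacle. If the substitution does not close cleanly, my fallback is to differentiate the closed form \eqref{eq:root} directly in $\beta$. The numerator $c(1-c)$ and the denominator both change, and I would try to show the logarithmic derivative is positive using $c<\tfrac12$ and the interiority condition \eqref{eq:attainability}. Another fallback is to argue at the crossing via the pool prices: since $\Pi_R=\Pi_S$ there and $\Pi_R$ does not depend on $\beta$, it suffices to show that raising $\beta$ lowers $\Pi_S$ at that point. That follows if $\partial\Pi_S/\partial\beta<0$ holds whenever $\Pi_S=\Pi_R$, which I would check from \eqref{eq:scale}.
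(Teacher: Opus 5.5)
Your $\alpha$ half is correct: $K_d=-2c\pi-(2d-p)\pi^2<0$ on $(0,1)$, so $K_\alpha<0$, and with $K_\pi<0$ you get $\partial\pi^*/\partial\alpha<0$.

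The gap is in the $\beta$ half. You stop at ``this should give a bound'', and the facts you plan to combine with the root identity ($d\ge p$, $c\ge0$, $m<1$, $m\ge d\pi^*$) are not enough. Take $c=0.5$, $d=0.8$, $\pi^*=0.3$ (so $m=0.74$) and $p=(m^2-c)/(d\pi^{*2})\approx0.661$. Every one of those facts holds, and $m^2=c+dp\pi^{*2}$ holds by construction, yet $1-2m+2\pi^*m-p\pi^{*2}\approx-0.096$. Your first fallback has a similar problem: $c<\tfrac12$ need not hold at an interior crossing (e.g.\ $p=0.1$, $\beta=0.9$, $\alpha$ near one).

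The missing ingredient is $x_H^S=c+d=p+(1-p)\alpha\le1$. With it the step closes. Substituting $p\pi^{*2}=(m^2-c)/d$ and $m=c+d\pi^*$ gives
\[
 1-2m+2\pi^*m-p\pi^{*2}
 =1-2c-d\pi^*(2-\pi^*)+\frac{c(1-c)}{d}
 >1-c-d=(1-p)(1-\alpha)\geq0,
\]
because $d\le1-c$ gives $c(1-c)/d\ge c$, and $\pi^*(2-\pi^*)<1$.

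Your third fallback is the paper's own proof, and it is the better main line for both halves. $\Pi_R$ involves neither $\alpha$ nor $\beta$. By \eqref{eq:dxh}--\eqref{eq:dxl}, $\partial\Pi_S/\partial\alpha=(1-p)\,\partial\Pi/\partial x_H>0$ and $\partial\Pi_S/\partial\beta=(1-p)\,\partial\Pi/\partial x_L<0$ at every $\pi$, not only where $\Pi_S=\Pi_R$. So nothing remains to check from \eqref{eq:scale}. What the implicit-function theorem does need is a strictly negative $\pi$-slope of $\Pi_R-\Pi_S$ at $\pi^*$. That follows from
\[
 \Pi_R-\Pi_S=\frac{(1-\pi)\,K}{(1-p\pi)\,m(1-m)}
\]
together with $K_\pi<0$.

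The same identity explains your obstacle and repairs your route in one line. $K$ is the premium gap times a positive factor that itself moves with $\beta$ through $m$. So $K_\beta$ picks up extra terms and has no global sign; it is negative, for instance, at $p=\pi=0.1$ with $\alpha,\beta$ near one. At a root of $K$ those extra terms vanish, and $K_\beta$ takes the sign of $-\partial\Pi_S/\partial\beta>0$.
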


A better hit rate on good papers widens the range of shares over which scalable
evaluation gives the larger premium; more false positives narrows it. The reason
is general: for a fixed $\pi$,
\begin{align}
 \frac{\partial\Pi}{\partial x_H}
 &=\pi(1-\pi)\left\{\frac{x_L}{m^2}
 +\frac{1-x_L}{(1-m)^2}\right\}>0, \label{eq:dxh}\\
 \frac{\partial\Pi}{\partial x_L}
 &=-\pi(1-\pi)\left\{\frac{x_H}{m^2}
 +\frac{1-x_H}{(1-m)^2}\right\}<0. \label{eq:dxl}
\end{align}
Any premium rises when good papers are certified more often and falls when bad
papers are. The crossing changes in the stated directions because
$x_H^S=p+(1-p)\alpha$, $x_L^S=(1-p)\beta$, and $\Pi_R-\Pi_S$ crosses zero from
above.

\begin{figure}[htbp]
 \centering
 \includegraphics[width=\linewidth]{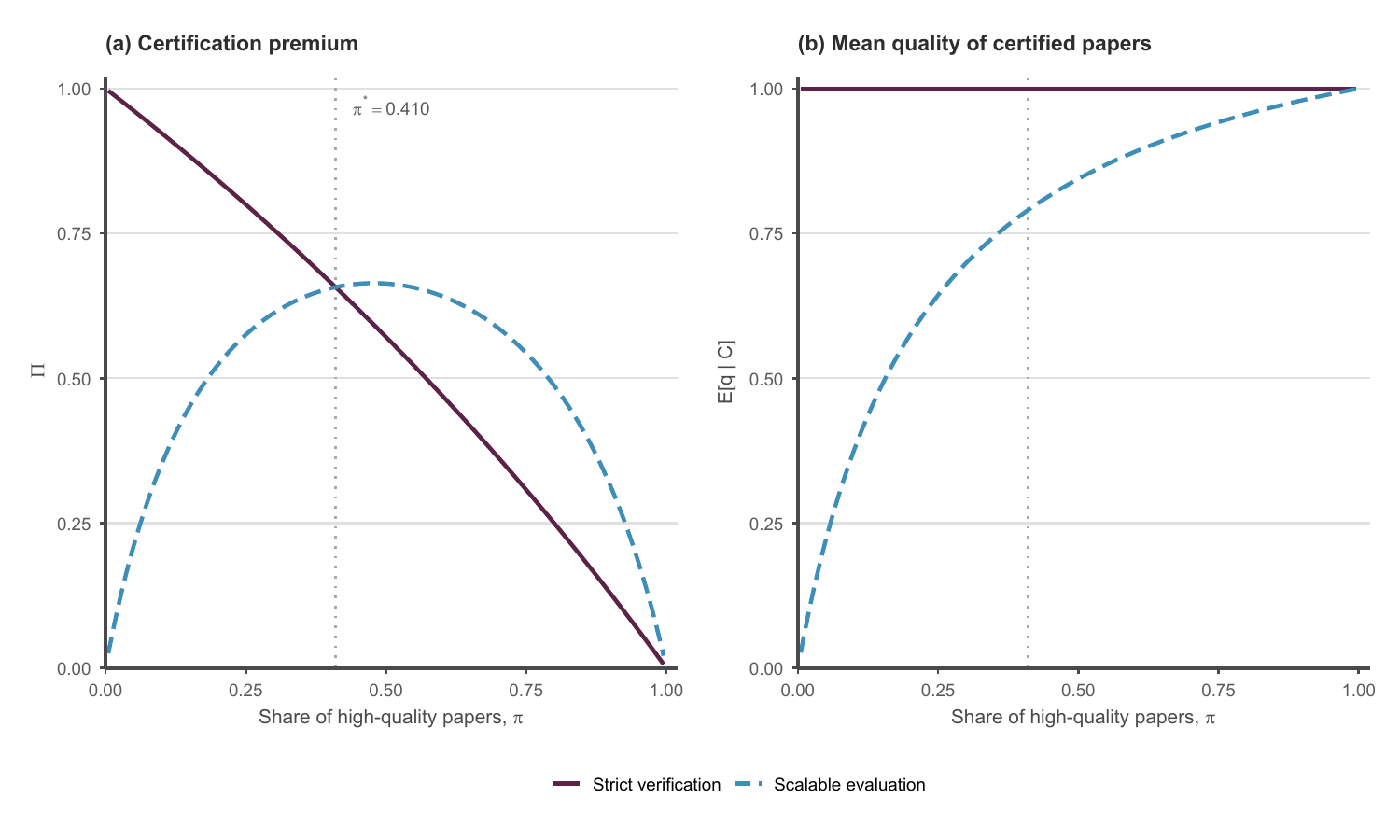}
 \caption{A cleaner stamp need not be a more valuable one. Panel (a) plots the
 premium, the value of holding the stamp, under strict and scalable evaluation
 against the share of high-quality papers. Panel (b) plots the average quality
 of stamped papers. The parameters are $p=0.25$, $\alpha=0.75$ and $\beta=0.20$,
 giving $\pi^*=0.410$. The parameter values are illustrative, not calibrated.}
 \label{fig:wedge}
\end{figure}

Figure~\ref{fig:wedge} shows the gap between how clean a stamp is and how much
it is worth. Strict verification gives the cleaner stamped pool at every share:
$P_C^R=1>P_C^S$. Yet above the crossing scalable evaluation gives the larger
premium. The point is not that the noisier screen yields the cleaner pool; it is
that cleanliness alone does not settle how much a common stamp is worth to the
author who holds it.

One special case makes the threshold easy to see. Suppose the journal certifies
every overflow paper without reading it, so $\alpha=\beta=1$.
Equation~\eqref{eq:K} then reduces to
\begin{equation}
 \operatorname{sign}(\Pi_R-\Pi_S)=\operatorname{sign}(1-2\pi).
 \label{eq:half}
\end{equation}
Strict verification is more valuable exactly when fewer than half of all papers
are good. This benchmark is worth stating on its own. A journal that stamps its
overflow unread is doing the least selective thing it can, and even then
rejecting wins only while good papers are the minority. Where the line sits for
a screen that does sort papers is not fixed by this case. It is fixed by
Corollary~\ref{cor:performance}: a higher hit rate on good papers lowers the
crossing and more false positives raises it, so the threshold can sit on either
side of one half, depending on how good the screen is.

Two clarifications mark the limits of the result. First, $\Pi$ is a difference
in prices, the private gain from holding the stamp. It is not welfare, not the
amount of information produced, not average stamped quality, and not the
certifier's running costs; the model does not treat maximising the premium as
the journal's goal. Nor is the premium the journal's income. Push the display
fee up towards its ceiling $\Pi$; every certified author pays it, so revenue
tends to $m\Pi$. Write $m_R$ and $m_S$ for the certified shares under the two
rules. Scalable evaluation certifies at least as large a share of both types, so
$m_S\geq m_R$, and it sorts the types at least as sharply, so $d\geq p$. By
equation~\eqref{eq:premium} the product $m\Pi$ equals
$\pi(1-\pi)(x_H-x_L)/(1-m)$, which is at least as large under scalable evaluation
on both counts: the larger certified share shrinks the denominator, and the
sharper sorting raises the numerator. So its potential revenue is at least
strict review's at every share. The crossing is about the value of one
certificate, not total fees. Second, the proposition compares the two rules for
a fixed batch of completed papers. Once submitting costs something and the net
premium is fixed, strict review draws a stronger set of authors, because weaker
authors are less willing to try. Near the crossing this change in who submits
can reverse the ranking. Appendix~\ref{app:extensions} works this out.

\section{Pedigree and the incidence of certification}

Section 4 answered the value question for one pool of look-alike papers. The
same model answers a second question: which authors gain or lose from the rule.
Let the market now see something about the author as well as the journal's stamp.
Suppose authors fall into two visible pedigree groups $b\in\{B,0\}$, where a
share $\pi_0$ of one group and a larger share $\pi_B$ of the other produce
high-quality work, $0<\pi_0<\pi_B<1$. Pedigree might be the
author's institution, membership of a research network, or an established name.
Posting in a prominent working paper series is the clearest example, and
crowding within such a series measurably cuts the attention each paper gets
\citep{lusher2023congestion}. The difference between the groups is a statistical
link with quality, not a taste for or against anyone. The question is one of
incidence, in the sense the word has in the study of taxes: on whose prices does
the choice of rule fall? Review is blind: how often intensive review
reaches a paper, and the screen's error rates, do not depend on the group $b$.
Some journals have adopted and tested such blindness \citep{blank1991effects},
and the results below show what it does not achieve. Because the market does see
the group, every earlier price formula applies group by group, with $\pi$
replaced by the group's own share $\pi_g$.

What review does to an unstamped paper is captured by a likelihood ratio $\ell$,
how likely a good paper is to go unstamped relative to a bad one,
\begin{equation}
 \ell=\frac{\Prb(N\mid q=1)}{\Prb(N\mid q=0)}
 =\frac{1-x_H}{1-x_L}.
 \label{eq:ell}
\end{equation}
For a group with share $\pi$, the market price of an unstamped paper is
\begin{equation}
 U(\pi;\ell)=\E[q\mid N;\pi]
 =\frac{\ell\pi}{1-\pi+\ell\pi}.
 \label{eq:unstamped}
\end{equation}
The two rules give
\begin{equation}
 \ell_R=1-p,
 \qquad
 \ell_S=\frac{(1-p)(1-\alpha)}{1-(1-p)\beta}.
 \label{eq:ells}
\end{equation}
When $0\leq\beta<\alpha<1$, $0<\ell_S<\ell_R<1$. A smaller likelihood ratio
means that having no stamp points more strongly to low quality, so a missing
stamp is worse news under scalable evaluation than under strict verification.
The reason is that the screen looks at every overflow paper, while strict
verification leaves the papers it cannot reach unexamined.

Define the pedigree gap among unstamped papers under rule $r\in\{R,S\}$ as the
difference in the two groups' unstamped prices,
\[
 W_r=U(\pi_B;\ell_r)-U(\pi_0;\ell_r).
\]
Because the two groups need not be close together, the comparison leans on one
summary of the pair: the logit midpoint, the group share whose odds are the
geometric mean of the two groups' odds, so that its log-odds sit exactly halfway
between the two groups' log-odds,
\[
 \widetilde\pi=
 \operatorname{logit}^{-1}\left\{
 \frac{\operatorname{logit}\pi_0+\operatorname{logit}\pi_B}{2}
 \right\}.
\]

\begin{proposition}[Pedigree and uncertified work]\label{prop:pedigree}
Under blind review with $0<\pi_0<\pi_B<1$ and
$0\leq\beta<\alpha<1$,
\begin{equation}
 \operatorname{sign}(W_R-W_S)
 =\operatorname{sign}(\widehat\pi-\widetilde\pi),
 \label{eq:pedigreesign}
\end{equation}
where
\begin{equation}
 \widehat\pi=
 \frac{1}{1+\sqrt{\ell_R\ell_S}}
 =\frac{1}{1+(1-p)
 \sqrt{(1-\alpha)/[1-(1-p)\beta]}}\in(1/2,1).
 \label{eq:pedigreethreshold}
\end{equation}
Strict verification produces the larger pedigree-related price gap among
uncertified papers exactly when $\widetilde\pi<\widehat\pi$. Moreover,
\begin{equation}
 \frac{\partial\widehat\pi}{\partial\alpha}>0,
 \qquad
 \frac{\partial\widehat\pi}{\partial\beta}<0,
 \qquad
 \frac{\partial\widehat\pi}{\partial p}>0.
 \label{eq:pedigreecomparative}
\end{equation}
\end{proposition}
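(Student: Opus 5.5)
The plan is to work in odds rather than probabilities, because the unstamped price \eqref{eq:unstamped} becomes a simple rational function of the likelihood ratio $\ell$ once each group's share is written in odds. Write $o_g=\pi_g/(1-\pi_g)$ for $g\in\{0,B\}$, so that $0<o_0<o_B$. Then $U(\pi_g;\ell)=\ell o_g/(1+\ell o_g)$, and the pedigree gap under a rule with likelihood ratio $\ell$ is
\[
 W(\ell)=\frac{\ell o_B}{1+\ell o_B}-\frac{\ell o_0}{1+\ell o_0}
 =\frac{\ell(o_B-o_0)}{(1+\ell o_B)(1+\ell o_0)}.
\]
The factor $o_B-o_0>0$ is common to $W_R=W(\ell_R)$ and $W_S=W(\ell_S)$. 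Both denominators are positive. So $\operatorname{sign}(W_R-W_S)$ equals the sign of the cross-multiplied difference
\[
 \ell_R(1+\ell_S o_B)(1+\ell_S o_0)-\ell_S(1+\ell_R o_B)(1+\ell_R o_0).
\]

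The key step is to expand this difference. The terms linear in $o_B+o_0$ carry the same coefficient $\ell_R\ell_S$ in both products, so they cancel. What remains is $(\ell_R-\ell_S)-\ell_R\ell_S(\ell_R-\ell_S)o_0o_B$, which factors as
\[
 (\ell_R-\ell_S)\bigl(1-\ell_R\ell_S\,o_0o_B\bigr).
\]
Under $0\le\beta<\alpha<1$, the text after \eqref{eq:ells} gives $0<\ell_S<\ell_R<1$, so the first factor is positive. Hence $\operatorname{sign}(W_R-W_S)=\operatorname{sign}(1-\ell_R\ell_S\,\tilde o^{\,2})$, where $\tilde o=\sqrt{o_0o_B}$.

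By the definition of the logit midpoint, $\tilde o$ is exactly the odds of $\widetilde\pi$. So $W_R>W_S$ if and only if $\tilde o<1/\sqrt{\ell_R\ell_S}$. Converting these odds back to a probability, which preserves order, this is equivalent to $\widetilde\pi<1/(1+\sqrt{\ell_R\ell_S})=\widehat\pi$. The same chain gives equality and the reverse inequality, which proves \eqref{eq:pedigreesign}. Substituting \eqref{eq:ells} gives the explicit form in \eqref{eq:pedigreethreshold}. Since $0<\ell_R\ell_S<1$, the threshold satisfies $\widehat\pi\in(1/2,1)$.

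For the comparative statics \eqref{eq:pedigreecomparative}, note that $\widehat\pi$ is strictly decreasing in the product
\[
 \ell_R\ell_S=(1-p)^2(1-\alpha)/[1-(1-p)\beta],
\]
so it suffices to sign the derivatives of this product. It is decreasing in $\alpha$. It is increasing in $\beta$, because the denominator falls as $\beta$ rises. For $p$, I would write the product as $(1-p)(1-\alpha)\cdot\bigl[1/(1-p)-\beta\bigr]^{-1}$. Both factors fall as $p$ rises: the second because $1/(1-p)$ increases and stays above $\beta$, since $\beta\le1<1/(1-p)$.

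The only real obstacle is the algebra of the cross-multiplication. Everything hinges on spotting that the $(o_B+o_0)$ terms cancel and that the remainder factors through $\ell_R-\ell_S$. If that factorization failed, the argument would need a less clean route via the monotonicity of $W(\ell)$. In fact $W(\ell)$ is single-peaked with its maximum at $\ell=1/\tilde o$, and that observation could serve as a backup argument.
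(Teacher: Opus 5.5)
Your proposal is correct and follows the paper's own proof: you cross-multiply the two gaps, factor out $\ell_R-\ell_S$, and compare the geometric mean of the groups' odds with $1/\sqrt{\ell_R\ell_S}$ before mapping back through the logistic function. Working in odds rather than shares only strips the positive factor $(1-\pi_0)(1-\pi_B)$ from the paper's bracket, and signing the product $\ell_R\ell_S$ in $\alpha$, $\beta$ and $p$ is equivalent to the paper's signing of its square root $(1-p)\sqrt{(1-\alpha)/[1-(1-p)\beta]}$.
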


The proposition stops just short of $\alpha=1$, and that boundary case runs in
strict verification's favour. At $\alpha=1$ the screen passes every good
overflow paper, so under scalable evaluation an unstamped paper is known to be
bad, both groups' unstamped prices are zero, and strict verification opens the
larger gap for every pair. This is the limit of the stated result, since
$\widehat\pi\to1$ as $\alpha\to1$.

The threshold lies above one half for every allowed set of parameters, which
fixes the direction of the result whenever the two groups' quality shares are
low. Suppose the two shares add to less than one. Then the product of their odds
is below one, the logit midpoint is below one half, and one half is below the
threshold. So strict verification produces the wider gap. The condition is that
the two group-specific shares sum below one, not that good papers are a minority
of the whole population, which would also turn on how many authors sit in each
group. A crowded certifier whose two groups meet it therefore cannot avoid the
following combination: rejecting what it cannot examine keeps its stamped pool
clean and, at the same time, raises what an author's affiliation is worth among
the papers it turned away. Scalable evaluation reverses the comparison only when
both groups are strong enough to raise their midpoint above the threshold.

The result compares market rewards, not how much pedigree actually tells a
reader about quality. In odds terms pedigree is worth exactly the same under
both rules, because a blind stamp multiplies both groups' starting odds by the
same likelihood ratio:
\begin{equation}
 \frac{\operatorname{odds}(q=1\mid N,B)}
 {\operatorname{odds}(q=1\mid N,0)}
 =\frac{\pi_B/(1-\pi_B)}{\pi_0/(1-\pi_0)}
 \label{eq:oddsinvariance}
\end{equation}
under either rule. The gap in log-odds does not change. The rule also leaves each
group's average price untouched: by the law of iterated expectations a group's
stamped and unstamped prices average back to its prior $\pi_g$. What the rule
moves is the spread between those two prices, and with it the gap $W_r$ among
unstamped papers, not any group's expected reward. Prices are a different
matter, because the market pays expected quality in levels, and going from odds
to levels is not a straight-line step. A smaller $\ell$ makes an unstamped paper
stronger evidence of low quality. For two weak groups, that evidence is close to
decisive for both, so both prices are near zero and nearly equal, and the gap
between them is small. For two strong groups, the same evidence makes each price
react sharply to the group's exact share, so a small difference in share produces
a large difference in price, and the gap is wide. The formula behind this is the
slope of the unstamped price as the group's share rises, $\ell/[1-(1-\ell)\pi]^2$:
a smaller $\ell$ makes it flatter where shares are low and steeper where they are
high. The threshold $\widehat\pi$ is the share at which the two slopes are equal.

\begin{figure}[htbp]
 \centering
 \includegraphics[width=\linewidth]{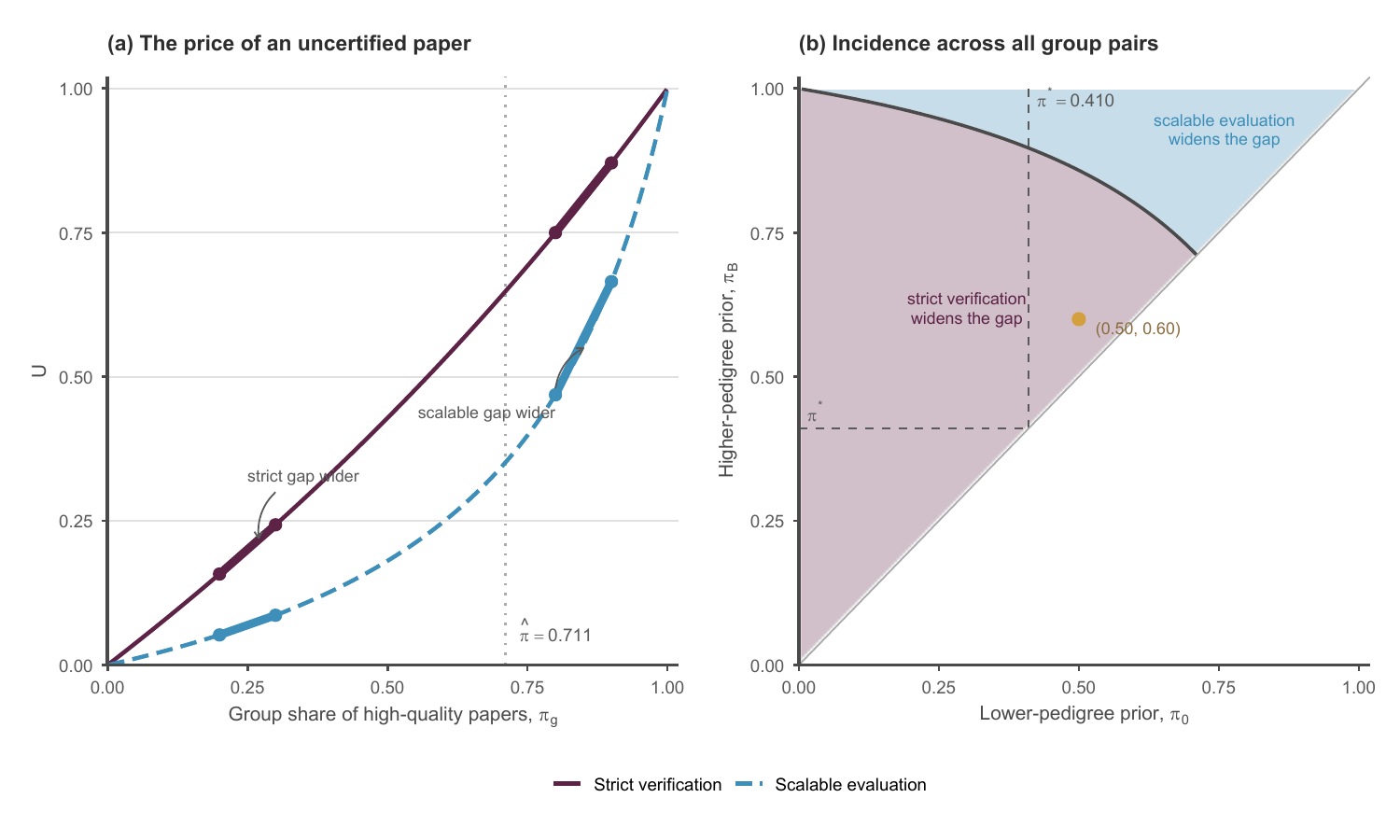}
 \caption{On whose prices the choice of rule falls. Panel (a) plots the price of
 an unstamped paper against a group's share of good work under the two rules, at
 the parameters of Figure~\ref{fig:wedge}. The marked pairs are groups at
 $(0.20,0.30)$ and $(0.80,0.90)$; the line joining each pair is the gap between
 their prices. Panel (b) shades, for every pair of group shares with
 $\pi_B>\pi_0$, the rule that produces the larger gap. The solid curve is the
 boundary $\widetilde\pi=\widehat\pi$, the dashed lines mark the premium
 crossing $\pi^*$ of Proposition~\ref{prop:strictness}, and the gold point is
 the pair $(0.50,0.60)$ discussed below. The parameter values are illustrative,
 not calibrated.}
 \label{fig:incidence}
\end{figure}

Figure~\ref{fig:incidence} draws both halves of the result. In panel (a) the
scalable price line lies below the strict line everywhere, flatter where shares
are low and steeper where they are high, and their order reverses at
$\widehat\pi=0.711$. The groups at shares 0.20 and 0.30 are separated by 0.085
under strict verification and by 0.034 under scalable evaluation. The groups at
0.80 and 0.90 are separated by 0.121 under strict verification and by 0.196
under scalable evaluation. Panel (b) covers every possible pair of group shares.
Over most of the square strict verification produces the larger gap, which follows
from $\widehat\pi>1/2$.

Pedigree is also priced differently among stamped papers. Under strict review
$\E[q\mid C,b]=1$ for both groups, so among stamped papers the two groups fetch
the same price and the gap is zero. Under scalable evaluation the gap is
positive exactly when $\beta>0$. The zero gap under strict review comes from
taking intensive review to be perfect; it need not survive if intensive review
is itself noisy.

The premium crossing of Section~4 has one more consequence for who pays what.

\begin{corollary}[Regime disagreement]\label{cor:groupdisagreement}
Suppose the crossing $\pi^*$ in Proposition~\ref{prop:strictness} is interior.
If $\pi_0<\pi^*<\pi_B$, then
\begin{equation}
 \Pi_R(\pi_0)>\Pi_S(\pi_0),
 \qquad
 \Pi_S(\pi_B)>\Pi_R(\pi_B).
 \label{eq:groupdisagreement}
\end{equation}
\end{corollary}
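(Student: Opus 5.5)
The corollary is Proposition~\ref{prop:strictness} applied group by group. Because the market sees pedigree, every price formula holds within each group with $\pi$ replaced by $\pi_g$, as noted at the start of this section. Review is blind, so $p$, $\alpha$ and $\beta$, and hence $c$ and $d$, are the same for both groups. So the two groups share the same function $K(\cdot;p,\alpha,\beta)$ and the same crossing $\pi^*$; only the argument at which $K$ is evaluated differs.

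The plan has three steps.

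First, I will confirm that the hypotheses of Proposition~\ref{prop:strictness} hold at each group's share. Both pools must be interior for each group. Under strict verification the certified share is $\pi_g p\in(0,1)$. Under scalable evaluation it is $c+\pi_g d$, which lies strictly between $0$ and $1$ whenever $\beta>0$ and $x_H^S=p+(1-p)\alpha<1$ or $\pi_g<1$; this holds because $\pi_g\in(0,1)$.

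Second, I will use the fact that the crossing $\pi^*$ is interior. This is only possible when $\beta>0$ and $\alpha>\beta$: if $\beta=0<\alpha$, scalable evaluation wins everywhere and there is no crossing. In that case Proposition~\ref{prop:strictness} says $K$ is strictly decreasing in $\pi$, with $K(\pi^*)=0$.

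Third, I will place each group relative to the root. From $\pi_0<\pi^*$ we get $K(\pi_0)>0$, so by \eqref{eq:sign} $\Pi_R(\pi_0)>\Pi_S(\pi_0)$. From $\pi_B>\pi^*$ we get $K(\pi_B)<0$, so $\Pi_S(\pi_B)>\Pi_R(\pi_B)$. Together these give \eqref{eq:groupdisagreement}.

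The only delicate point is the second step. Proposition~\ref{prop:strictness} states strict monotonicity only under $\beta>0$ and $\alpha>\beta$, so I must argue that an interior crossing forces these conditions. The degenerate cases are handled as follows:
\begin{itemize}
\item If $\alpha=\beta$ and $\beta>0$, then $d=p$ and $K=c(1-c)-2cp\pi$. This is still strictly decreasing, so the same argument goes through.
\item If $\beta=0$, then $c=0$ and $K=-d(d-p)\pi^2\le 0$, so there is no interior sign change.
\end{itemize}
With that settled, the rest is direct substitution into the proposition already proved.
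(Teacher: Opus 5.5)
Your proposal is correct and matches the paper's proof. The paper simply evaluates the sign result of Proposition~\ref{prop:strictness} at $\pi_0<\pi^*<\pi_B$, relying, as you do, on blind review giving both groups the same $K$ and hence the same crossing; one correction is that an interior crossing does not actually force $\alpha>\beta$ (for example, $\alpha=\beta=1$ gives a crossing at $1/2$), but your first bullet already handles that case correctly.
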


An author from the weaker group will pay more for the stamp under strict
verification; an author from the stronger group will pay more under scalable
evaluation. This ranks what each group would pay to hold the stamp, the
willingness to pay of Lemma~\ref{lem:demand}; it is not a ranking of which rule
each group would choose overall. An author's expected gain from submitting is her
premium times her chance of being certified, and strict verification, by
rejecting the overflow, lowers that chance. A weaker author can therefore value
the stamp more under strict review yet be better off overall under scalable
evaluation, which certifies her more often. At the running parameters, an author
in a group where good work is one paper in five, who privately rates her own
chance at that same one in five, expects 0.042 from submitting under strict
review but 0.148 under scalable evaluation, though the strict premium of 0.842 is
larger than the scalable one of 0.523. If a single display fee applies to both
groups, every author still submits as long as the fee is below the smaller of the
two groups' premiums.

The two thresholds answer different questions, and the rankings they induce need
not agree. The premium crossing $\pi^*$ is compared with one group's share and
ranks what the stamp is worth to that group. The pedigree threshold $\widehat\pi$
is compared with the logit midpoint of a pair of groups and ranks the gap between
their prices among unstamped papers. The two rankings agree at the two ends.
Suppose both groups' shares lie below $\pi^*$ and the pair's midpoint lies below
$\widehat\pi$. Then strict verification gives both groups the more valuable
stamp and also produces the wider gap between them among unstamped papers. For an
author in the weaker group this is the high-stakes rule: it raises what she
gains by being stamped and what she loses by not being. When instead both shares
lie above $\pi^*$ and the midpoint lies above $\widehat\pi$, scalable evaluation
gives both groups the more valuable stamp and the wider gap, so it is then the
high-stakes rule and strict verification shrinks both differences. In
between, the two rankings can point different ways.

Neither condition implies the other. The gold pair in
Figure~\ref{fig:incidence}, with shares 0.50 and 0.60, has its logit midpoint at
0.55, below $\widehat\pi=0.711$, so strict verification produces the wider pedigree
gap; yet both shares exceed $\pi^*=0.410$, so scalable evaluation gives the more
valuable stamp to both groups. Which threshold is larger depends on the
parameters, and nothing in the model forces the two comparisons to agree. All of
this is about prices; none of it is a ranking of welfare.

\section{The codified screen in equilibrium}

The scalable screen is no longer hypothetical. In August 2026, Refine, an AI
referee built by economists to run scalable technical checks
\citep{golub2026refine}, announced partnerships with two leading publishers
in economics. Where such a tool sits in the process decides which part of the
model it touches. Run after acceptance, as a final technical audit, it only
checks papers expert judgment has already chosen; it arrives after the stamp is
given and changes nothing above. Run on the overflow, it is exactly the scalable
screen of this model. This section asks the third question: what is the screen's
verdict worth once the authors it screens can run the tool themselves?

Let the screen read a rule-checkable mark on each manuscript, the result of the
checks it can run by rule. A high-quality paper has a passing mark with
probability $\alpha$ and a low-quality paper with probability $\beta$, and the
screen certifies an unreviewed paper exactly when its mark passes. These are the
same rates as before, now with a story behind them. Suppose the tool that runs
the screen is also in the authors' hands before they submit: an author can read
the paper's mark from the tool's report and pay a cost $\kappa>0$ to fix the
rule-checkable flaws the screen would catch. Fixing changes the mark, not the
quality, because quality is the messy part no rule reads. The market sees
neither the mark nor the fixing, only the stamp, and prices using how authors
actually behave. Two assumptions keep this honest: the screen is not already
useless (it does not pass everything), and the mark is a feature of the
manuscript alone, learned only from the report and, once quality is held fixed,
unrelated to anything else the author knows.

Under strict verification, fixing is worthless. An unreviewed paper is rejected
whatever its mark, and a reviewed paper is judged on quality, which fixing does
not change. Under scalable evaluation, a fixed paper is certified in the overflow
instead of rejected, so every author whose mark fails gains the same amount from
fixing: the premium minus the fee $f$ to display the stamp, when that difference
is positive and nothing otherwise, times the chance $1-p$ of being sent to the
screen. That the gain is the same for good and bad papers is the crux: being
indifferent about fixing does not sort good papers from bad.

\begin{proposition}[Erosion of the codified screen]\label{prop:erosion}
Under scalable evaluation with repair cost $\kappa>0$:
\begin{enumerate}
\item[(i)] The strict-verification premium is unaffected at every $\kappa$.
\item[(ii)] In every equilibrium in which a positive mass of failing-mark
papers is repaired and a positive mass is not, the scalable premium equals
\[
 \Pi_S=f+\frac{\kappa}{1-p}.
\]
\item[(iii)] Let $\Pi^{\min}$ be the premium when repair reaches every
failing-mark low-quality paper and no high-quality one, given in closed form
in the appendix. If $\kappa<(1-p)\bigl(\Pi^{\min}-f\bigr)$, the unique
equilibrium is full repair: the screen passes everything it reads, scalable
evaluation certifies the overflow unfiltered, and
$\operatorname{sign}(\Pi_R-\Pi_S)=\operatorname{sign}(1-2\pi)$.
\end{enumerate}
\end{proposition}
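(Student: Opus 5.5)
The plan takes the three parts in order, treating the repair decision as a one-dimensional best response and leaning on Lemma~\ref{lem:premium} for all price computations.

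For (i), observe that under strict verification certification depends only on whether intensive review reaches the paper and on $q$, and neither depends on the mark. So repair changes no author's chance of certification. Since $\kappa>0$, no author repairs in any equilibrium. Then $x_H^R=p$ and $x_L^R=0$ as before, and equation~\eqref{eq:strict} is unchanged.

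For (ii), first fix the market's conjecture about the premium $\Pi_S$. An author whose mark fails and who repairs moves her overflow outcome from $N$ to $C$. She gains $(1-p)\max\{\Pi_S-f,0\}$ net of display, and this gain is the same for $q=0$ and $q=1$, since intensive review ignores the mark and the overflow screen reads only the mark. Her private belief $\theta_i$ and her quality do not enter this gain. An author strictly prefers to repair when $(1-p)(\Pi_S-f)>\kappa$ and strictly prefers not to when the inequality is reversed. If a positive mass repairs and a positive mass does not, some authors must be indifferent, or at least weakly prefer each action. Because the gain is common to all of them, this forces $(1-p)(\Pi_S-f)=\kappa$, that is, $\Pi_S=f+\kappa/(1-p)$. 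Here $\Pi_S$ is the equilibrium premium computed from the induced $x_H,x_L$ by Lemma~\ref{lem:premium}. I also need to check that display remains optimal. This holds because $\Pi_S-f=\kappa/(1-p)>0$, so Lemma~\ref{lem:demand}'s logic still gives full submission and display.

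For (iii), parametrize repair by the fractions $r_H,r_L\in[0,1]$ of failing-mark good and bad papers that repair. These give $x_H=p+(1-p)[\alpha+(1-\alpha)r_H]$ and $x_L=(1-p)[\beta+(1-\beta)r_L]$. By the signs in \eqref{eq:dxh}--\eqref{eq:dxl}, the premium is increasing in $r_H$ and decreasing in $r_L$. So over all repair profiles the premium is minimized at $r_L=1$, $r_H=0$, which is $\Pi^{\min}$; I would write out its closed form from Lemma~\ref{lem:premium}. The hypothesis $\kappa<(1-p)(\Pi^{\min}-f)$ then implies that under every conjectured profile the repair gain strictly exceeds $\kappa$. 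Hence every failing-mark author strictly repairs, and full repair $r_H=r_L=1$ is the only profile consistent with best responses.

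Under full repair the overflow is certified wholesale, so $x_H=1$ and $x_L=1-p$. This is exactly the $\alpha=\beta=1$ case of Proposition~\ref{prop:strictness}, which gives \eqref{eq:half}. I would also check that the full-repair premium is at least $\Pi^{\min}$, so the condition is self-consistent. This follows from monotonicity in $r_H$.

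The main obstacle is the equilibrium bookkeeping in (ii) and (iii). The premium is an equilibrium object that depends on whom the market believes repairs. The argument must therefore be a fixed-point statement: show that no profile other than full repair survives, rather than analyze one conjecture. I would handle this by bounding the premium uniformly over all $(r_H,r_L)$ via monotonicity, as above. A second point to watch is the assumption that the screen does not pass everything, $\beta<1$. It ensures that failing-mark low-quality papers have positive mass, so the case $r_L$ matters nontrivially and $\Pi^{\min}$ lies below the full-repair premium only weakly.
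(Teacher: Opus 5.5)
Your proposal is correct and follows the paper's own proof essentially step for step. It uses the same repair pattern $(r_H,r_L)$ and induced certification rates, the common-gain indifference argument giving $\kappa=(1-p)(\Pi_S-f)$ for part (ii), and the monotonicity signs in \eqref{eq:dxh}--\eqref{eq:dxl} to put the worst case at $(r_H,r_L)=(0,1)$, which forces full repair and hence the $\alpha=\beta=1$ benchmark \eqref{eq:half}. The only cosmetic difference is in (iii): you rule out every non-full profile at once, because the repair gain exceeds $\kappa$ under any conjectured profile, whereas the paper dismisses no-repair and partial-repair equilibria separately before checking that full repair is an equilibrium.
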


Part (ii) pins the premium, in every partial-fixing equilibrium, at the display
fee plus the cost of fixing divided by $1-p$. Fixing pays off only in the
fraction $1-p$ of papers that reach the screen, so its cost is spread over that
fraction and enters divided by $1-p$. The logic is a no-arbitrage argument. If
the stamp were worth more than that sum, every flagged author would fix; if it
were worth less, none would; a premium at which some flagged authors fix and
some do not must equal it exactly. Which authors fix is not pinned down. When
authors have some private read on their own quality, fixing can concentrate on
exactly the papers the screen would have caught, and the screen's real error
rates can converge until it separates nothing at all, while the premium stays
exactly at the pinned level. Equilibrium disciplines the price of the
stamp, not who ends up behind it.

Across the repair cost the equilibrium shifts from one regime to another: a high
cost leaves the screen intact and no one repairs, an intermediate cost gives the
partial-repair equilibria whose premium part (ii) pins, and a low cost brings on
full repair, which is the unique equilibrium when the screen's floor is positive
and one equilibrium among several otherwise. The pinning identity rests on every flagged author facing the same
repair cost, which is what stops repair from sorting good papers from bad. If the
cost fell unevenly by quality, repair would carry information about quality and
the identity need not hold exactly; whether the screen then separates more or
less depends on which papers are cheaper to fix.

Part (iii) is the limit the introduction promised. When fixing is cheap enough,
the screen certifies everything it reads, and the choice between the two rules
reduces to the clean benchmark of Section~4, equation \eqref{eq:half}: strict
verification gives the more valuable stamp exactly when fewer than half the
papers are good. Full repair is an equilibrium whenever repairing pays for itself
at full participation, that is when $\kappa$ is at most $(1-p)$ times the
full-repair premium net of the fee. For full repair to be the \emph{only}
equilibrium, the screen's worst-case premium $\Pi^{\min}$ must be positive, which
at a display fee of zero requires coverage $p$ to exceed $(1-p)(1-\alpha)$, the
share of good papers that sit in the overflow and fail the screen; at the running
parameters, one quarter exceeds three sixteenths. Where that floor is not
positive, full repair can still occur, but alongside other equilibria.

Erosion does not always lower the premium. When good work is common, the intact
screen does harm of its own: it wrongly rejects good overflow papers and so
raises the quality of the unstamped pool, and dissolving the screen stops this.
Whether full fixing lowers the premium is therefore a single-crossing question:
the premium when the overflow is passed unfiltered is $\pi/[1-p(1-\pi)]$, and when
the quadratic recorded in the appendix changes sign inside the unit interval,
this unfiltered premium is larger than the intact screen's above a third
threshold $\pi^{u}$. At the running parameters $\pi^{u}=0.580$. Below it the
intact screen has the higher premium, so erosion lowers the premium: at a
good-work share of 0.20 it falls from 0.523 to 0.250. Above it, for the reason
just given, erosion raises the premium: at 0.60 it rises from 0.642 to 0.667. The paper's
three thresholds run in the order $\pi^*=0.410$, then $\pi^u=0.580$, then
$\widehat\pi=0.711$ at these parameters, and the order depends on the parameters.

The same tool can instead be sold per manuscript, so that an author buys the
report and the fix together without first seeing the mark. Then, when the stamp
is worth displaying and the screen still separates at all, the authors willing to
pay most are the ones least sure of their work, because their papers are the
likeliest to fail the screen. So the authors who buy are those least sure,
everyone below a confidence cutoff, and their buying raises the false-positive
rate by at least as much as the true-positive rate, so any interior cutoff
erodes the screen's separation. This is
the pattern experiments already find: applicants who can use generative AI are
harder for employers and investors to judge \citep{cowgill2026does}. None of this
denies that fixing genuinely improves a manuscript on the things a rule can
check; the model says nothing about that value. What full fixing removes is
information: a pass mark everyone can obtain separates no one, while the messy
quality underneath is unchanged.

\citet{fourie2026messy} reaches the one-half benchmark a different way, through a
flood of submissions while the journal stamps its overflow unread. The erosion
limit reaches the same boundary for a fixed batch of papers, by dissolving the
filter instead. When the publisher partnerships were announced, Garicano noted
that authors will run the tool before submitting, so that in equilibrium papers
``will have passed by it beforehand and the AI referee will find nothing new''
\citep{garicano2026judgement}. In the model this is exactly right in the
full-fixing equilibrium; short of that, authors' indifference pins the premium
instead.

The other links from AI to the model give a map, not a forecast, because none
of them has a sign that holds unconditionally. Let AI capability be $a$, the
volume of completed papers $n(a)$, and intensive-review capacity $A(a)$. Under
random rationing,
\[
 p(a)=\min\{1,A(a)/n(a)\},
\]
where at $p=1$ every paper gets intensive review, the two rules coincide, and
there is nothing to compare. AI may also change the share of good work $\pi(a)$
and the screen's performance $(\alpha(a),\beta(a))$. The dividing line for policy
is simply
\[
 K\bigl(\pi(a);p(a),\alpha(a),\beta(a)\bigr)=0.
\]
This setup allows both a flood of AI-written submissions, which lowers $p$, and
AI-assisted review, which raises $A$ or improves $(\alpha,\beta)$. The
introduction's split between rule-based and messy work organises these links. A
tool raises $\alpha$ without raising $\beta$ when it turns more of the evaluation
task into rules, as verification-first proposals aim to
\citep{you2026verification}, and the proposition above gives the terms on which
that gain survives once authors hold the same tool. It raises $A$ when it saves
referees time on the rule-based parts of their work. Even that route has a
design margin this model holds fixed: \citet{gans2026designing} shows that
giving every referee the same AI report can reduce what the editor learns from
the referees together, because their remaining errors become correlated, so the
report may be best kept by the editor or given to one referee only. That result
concerns the papers experts do reach; the erosion above concerns the papers
they do not. The messy core, the
judgment itself, stays the scarce input. The one term with no settled sign is
the share of good work $\pi(a)$: the same tools that make submissions harder to
screen can also let authors finish projects whose ideas were sound but whose
execution was held back, and these two effects have opposite signs. Whether a
given tool helps or hurts is an empirical question; the model points to the
quantities that settle the answer rather than guessing it.

Blindness is itself a choice the journal makes. If the screen sees who the author
is and reacts to affiliation or prominence, as recent AI experiments suggest
\citep{pataranutaporn2025peerreview}, its error rates have to be tracked group by
group, $(\alpha_g,\beta_g)$. Section 5 instead isolates the indirect price effect
that survives when the journal hides pedigree from both expert and scalable
review. Direct bias in the evaluator would be a further channel on top of it.

\section{Conclusion}

Does a journal that cannot review everything protect the value of its stamp by
rejecting what it cannot examine? Not always. Strict verification keeps the
stamped pool clean at every share of good work. Yet when the screen sorts well
enough, and once good work is common enough, the strict stamp is worth less than
the noisier scalable one. The line between the two cases is a single crossing in
the share of good work, which the model gives in closed form, and the model also
says when the screen is too weak for that line to be reached at all. Nor is the
screen a fixed alternative. Because scalable evaluation certifies by rule,
authors who hold the same rule can pass it, and in every equilibrium where some
pay to do so and some do not, the scalable premium equals the display fee plus
the cost of fixing, made larger because fixing pays off only in the papers that
reach the screen. When fixing is cheap enough, the screen certifies
everything it reads and the comparison reduces to the clean benchmark: strict
verification gives the more valuable stamp exactly when fewer than half the
papers are good.

The three thresholds are separate quantities. At the illustrative parameters
used throughout, review reaching one paper in four and a screen passing three
good papers in four and one bad in five, they fall in a strict order: the
premium crossing at a good-work share of 0.410, the erosion threshold at 0.580,
the pedigree threshold at 0.711. A journal can then give the more valuable stamp
under scalable evaluation, see that premium cut by authors fixing their papers,
and still be the rule that narrows the pedigree gap, all at once. The order
depends on the parameters; the point is that the three comparisons need not line
up.

The rule also sets how the market rewards pedigree. Even when review is blind,
the two rules create different unstamped pools. A second closed-form threshold
says when strict verification produces the wider pedigree gap among unstamped papers
and when scalable evaluation does. And two groups on opposite sides of the
premium crossing value the stamp more under opposite rules. So rationing expert
attention falls on some authors more than others, even when no one is treated
differently on purpose.

The results revise three beliefs. A journal that keeps its stamp scarce by
rejecting what it cannot read is not thereby protecting the stamp's value;
wherever the crossing lies inside the range, there are shares of good work at
which it is lowering it. Blind review does not settle what pedigree is worth; the
rule for the overflow does. And a journal that adopts a well-tested screen has
not bought a tool with fixed error rates; it has entered an equilibrium in which
its own authors' choices set those rates, and in which, whenever some authors
game the screen and others do not, its verdict is worth exactly what getting past
it costs. The third belief matters most for the debate about adopting
these tools: the accuracy a screening tool shows in trials need not be the
accuracy, or the value, it keeps once everyone can use it.

Taken together, the results separate three properties of a certifier. The purity
of its stamped pool, the chance a stamped paper is good, depends on that pool
alone. The value of its stamp compares the stamped and unstamped pools. The
market reward to pedigree compares visible author groups within those pools.
A review rule changes all three, and they need not rank institutions the same
way. The scarce input throughout is the messy one. Expert judgment is the
resource the journal cannot scale, and it is also the one form of evaluation an
author cannot pass in advance: when some authors game a rule-based screen and
others do not, its stamp is priced by the cost of the gaming, and no gaming cost
exists for judgment. The three thresholds say how institutions divide the
consequences of that scarcity.

That separation is why the argument reaches beyond journals, to any institution
built the same way: one pooled public stamp, a comparison pool of those it turns
away, and expert attention too scarce to reach everyone. Grant panels, hiring
committees, accreditation bodies and rating agencies often share that structure,
and each rations the expert attention behind the stamp. Each has
the same two options for the applications it cannot reach, and each is judged in
public by how clean its approvals look. The model says that this standard can
misrank them, because the pool an evaluator turns away is priced too; that any
check an institution can run by rule, applicants can in time pass by rule, so the
price of a certificate built on rule-based checks can equal the cost of passing
them rather than the quality of what passes; and that a certifier can examine
every applicant blind to who they are and still change what an applicant's
affiliation is worth. Rationing expert attention is therefore not a back-office
decision that leaves the market untouched. It sets what a credential is worth,
and what an author's affiliation is worth to those who do not hold one.

The results hold for a fixed batch of completed papers, one public stamp, and
competitive pricing on what the market can infer. The pedigree result also needs
review that is blind to the author's group. If the journal reveals which route a
paper took, readers price the two routes separately and the pooling the results
depend on is gone. If submitting is costly, the two rules attract different sets
of authors, and near the crossing that change in who submits can reverse the
ranking. If the evaluator reacts to pedigree directly, group-specific error rates
add a separate channel. The results also treat the uncertified pool as terminal:
a paper rejected for capacity stays in the comparison pool rather than moving on
to certification elsewhere, and where it is recertified at another venue the
outside pool that prices the stamp is different and the institutional reach of
the results narrows. None of these price comparisons is a ranking of welfare
or fairness.

What would settle where real institutions stand relative to the crossing is a
paired-assessment study. The same manuscripts would be run through both a
scalable screen and an independent expert review; treating that expert review as
the gold standard, the pairs pin down $(\pi_g,\alpha,\beta)$ within each pedigree
group, and editors' own records fix the coverage $p$. RePEc could supply a
population of economics papers to draw from. Publication outcomes on their own
will not do, because a paper rejected for lack of capacity need not reveal how an
expert would have judged it. Such a study would pin down $(\pi_g,\alpha,\beta)$
and the coverage $p$, but not how rejected papers come to populate the market's
comparison pool, the engine of the main result; tracking where capacity-rejected
papers end up is a separate empirical step. Until that evidence exists, the
model's message is
conditional but exact: to judge a crowded certifier, one has to price both the
pools its rule creates and the groups that fill them.

\newpage
\appendix

\section{Proofs}\label{app:proofs}

\begin{proof}[Proof of Lemma~\ref{lem:premium}]
Bayes' rule gives
\[
 P_C=\frac{\pi x_H}{m},\qquad
 P_N=\frac{\pi(1-x_H)}{1-m}.
\]
Subtracting and using $m=\pi x_H+(1-\pi)x_L$ yields
equation~\eqref{eq:premium}.
\end{proof}

\begin{proof}[Proof of Lemma~\ref{lem:demand}]
A stamp-eligible author earns $P_C-f$ by displaying and $P_N$ by declining, so
she strictly displays when $f<\Pi$. Not submitting also yields $P_N$ because
applications are private. Submission followed by optimal display has expected
gain $s(\theta_i)(\Pi-f)>0$. Under strict review
$s(\theta_i)=p\theta_i>0$; under scalable evaluation it is weakly larger. At
$f=\Pi$, the relevant choices are weak, making $\Pi$ the supremum strictly
sustainable price.
\end{proof}

\begin{proof}[Proof of Proposition~\ref{prop:strictness}]
Substitute equations~\eqref{eq:strict} and \eqref{eq:scale}, multiply by their
positive denominators, and collect terms. The sign of the difference is the
sign of $K$. If $\alpha>\beta$, then $d>p$, and
\[
 K_\pi=-2cd-2d(d-p)\pi<0
\]
when $\beta>0$. Solving the quadratic and rationalising its positive root gives
equation~\eqref{eq:root}. If
$\beta=0$, then $c=0$ and $K=-d(d-p)\pi^2<0$. For attainability, observe that
$K(0)=c(1-c)>0$ and
\[
 K(1)=(1-p)\{(1-p)\beta-\alpha[p+(1-p)\alpha]\}.
\]
The decreasing function therefore crosses inside $(0,1)$ exactly when
equation~\eqref{eq:attainability} holds.
\end{proof}

\begin{proof}[Proof of Corollary~\ref{cor:performance}]
Equations~\eqref{eq:dxh} and \eqref{eq:dxl} follow by differentiating the two
posterior means. Increasing $\alpha$ raises $x_H^S$ and therefore $\Pi_S$ at
every prior; increasing $\beta$ raises $x_L^S$ and lowers $\Pi_S$. At an
interior crossing, Proposition~\ref{prop:strictness} implies that
$\Pi_R-\Pi_S$ crosses zero from above. The implicit-function theorem gives the
two signs.
\end{proof}

\begin{proof}[Proof of Proposition~\ref{prop:pedigree}]
Blind review implies $\Prb(C\mid q,b)=\Prb(C\mid q)$, so Bayes' rule applies
separately within each pedigree cell. Equation~\eqref{eq:unstamped} gives
\[
 U(\pi_B;\ell)-U(\pi_0;\ell)
 =\frac{\ell(\pi_B-\pi_0)}
 {[1-(1-\ell)\pi_0][1-(1-\ell)\pi_B]}.
\]
Multiplying $W_R-W_S$ by its positive denominators leaves the sign of
\[
 (\ell_R-\ell_S)
 \{(1-\pi_0)(1-\pi_B)-\ell_R\ell_S\pi_0\pi_B\}.
\]
Because $\ell_R>\ell_S$, the expression is positive exactly when
\[
 \sqrt{\frac{\pi_0}{1-\pi_0}
             \frac{\pi_B}{1-\pi_B}}
 <\frac{1}{\sqrt{\ell_R\ell_S}}.
\]
Applying the increasing logistic transformation gives
equations~\eqref{eq:pedigreesign} and \eqref{eq:pedigreethreshold}.
The denominator of $\widehat\pi$ is one plus a positive term. Increasing
$\alpha$ reduces that term, while increasing $\beta$ raises it. For $p$, set $u=1-p$.
The same term becomes
$u\sqrt{(1-\alpha)/(1-u\beta)}$, which is strictly increasing in $u$ and
therefore strictly decreasing in $p$. This proves
equation~\eqref{eq:pedigreecomparative}.
\end{proof}

\begin{proof}[Proof of Corollary~\ref{cor:groupdisagreement}]
Proposition~\ref{prop:strictness} states that $\Pi_R(\pi)>\Pi_S(\pi)$ below
an interior $\pi^*$ and that the inequality reverses above it. Evaluating the
two group-specific premiums at $\pi_0<\pi^*<\pi_B$ gives
equation~\eqref{eq:groupdisagreement}.
\end{proof}

\begin{proof}[Proof of Proposition~\ref{prop:erosion}]
Describe repair behaviour by the pattern $(r_H,r_L)\in[0,1]^2$, the shares of
failing-mark high- and low-quality papers repaired. The certification rates
are
\[
 x_H=p+(1-p)\bigl[\alpha+(1-\alpha)r_H\bigr],\qquad
 x_L=(1-p)\bigl[\beta+(1-\beta)r_L\bigr],
\]
and Lemma~\ref{lem:premium} gives the premium $\Pi(r_H,r_L)$.
(i) Strict verification rejects every unreviewed paper at every mark, and a
reviewed paper is judged on its quality, which repair does not change.
(ii) In such an equilibrium some failing-mark author weakly prefers
repairing and some weakly prefers not. The gain from repair,
$(1-p)\max\{\Pi-f,0\}$, is common to all failing-mark authors, so both weak
preferences hold for every one of them: indifference,
$\kappa=(1-p)(\Pi-f)$, which rearranges to the identity displayed in part
(ii). Reallocations of repair across indifferent authors that hold the
premium at this level preserve equilibrium, so which authors repair is
not identified, while the premium is.
(iii) By equations~\eqref{eq:dxh} and \eqref{eq:dxl}, $\Pi$ is strictly
increasing in $x_H$ and strictly decreasing in $x_L$, so its minimum over
patterns is at $(r_H,r_L)=(0,1)$:
\[
 \Pi^{\min}=\frac{\pi(1-\pi)\,d_w}{m_w(1-m_w)},\qquad
 d_w=p-(1-p)(1-\alpha),\qquad
 m_w=\pi\bigl[p+(1-p)\alpha\bigr]+(1-\pi)(1-p).
\]
Suppose $\kappa<(1-p)(\Pi^{\min}-f)$. A no-repair equilibrium requires
$\kappa\geq(1-p)[\Pi(0,0)-f]$ and a partial-repair equilibrium requires
$\kappa=(1-p)[\Pi(r_H,r_L)-f]$; both right-hand sides are at least
$(1-p)(\Pi^{\min}-f)$, which exceeds $\kappa$, a contradiction. Full repair
is an equilibrium because
$(1-p)[\Pi(1,1)-f]\geq(1-p)(\Pi^{\min}-f)>\kappa$. At full repair every
overflow paper is certified, which is the transparent benchmark of
Section~4, equation~\eqref{eq:half}.
\end{proof}

The threshold $\pi^u$ of Section~6 compares the intact screen with
unfiltered overflow. Applying the argument of
Proposition~\ref{prop:strictness} to this pair of technologies and dividing
the resulting expression by the positive constant $p$ leaves
the sign of
\[
 J(\pi)=d(1-p)-c(1-c)+2d(c-1+p)\,\pi+d(d-p)\,\pi^2,
\]
whose derivative $2d(1-p)[(\alpha-\beta)\pi-(1-\beta)]$ is strictly negative
on $(0,1)$ when the screen is not trivial. Full repair lowers the premium
exactly while $J(\pi)>0$. When $J(0)>0>J(1)$, the threshold $\pi^u$ is the
unique root of $J$ in $(0,1)$, the smaller root of the quadratic when
$\alpha>\beta$; at $p=0.25$, $\alpha=0.75$ and $\beta=0.20$ it equals
$0.580$.

\section{Robustness and boundaries}\label{app:extensions}

\subsection{Two noisy classifiers}

Perfect intensive review is useful for the single-crossing result, but the
posterior representation does not require it. For regime $j\in\{R,S\}$ write
$c_j=x_{Lj}$ and $d_j=x_{Hj}-x_{Lj}>0$. Then
\begin{align}
 \operatorname{sign}(\Pi_R-\Pi_S)=\operatorname{sign}\{&
 d_Rc_S(1-c_S)-d_Sc_R(1-c_R) \notag\\
 &+2d_Rd_S(c_R-c_S)\pi
 +d_Rd_S(d_R-d_S)\pi^2\}. \label{eq:arbitrary}
\end{align}
To obtain this expression, apply Lemma~\ref{lem:premium} with
$m_j=c_j+\pi d_j$ and expand
$d_Rm_S(1-m_S)-d_Sm_R(1-m_R)$. Two arbitrary binary classifiers can generate
at most two isolated interior crossings unless their premiums coincide. The
assumption that intensive review is perfect sharpens this to at most one
isolated crossing.

\subsection{Targeted intensive review}

Suppose a pre-review signal assigns intensive review with probabilities $r_H$
and $r_L$ for high- and low-quality papers. Strict verification has
$(x_H^R,x_L^R)=(r_H,0)$, whereas scalable evaluation has
\[
 x_H^S=r_H+(1-r_H)\alpha,\qquad
 x_L^S=(1-r_L)\beta.
\]
Define $c=(1-r_L)\beta$ and $d=x_H^S-x_L^S$. Direct substitution gives
\begin{equation}
 \operatorname{sign}(\Pi_R-\Pi_S)=\operatorname{sign}
 \{c(1-c)-2cd\pi+d(r_H-d)\pi^2\}. \label{eq:targeted}
\end{equation}
If
\[
 (1-r_H)\alpha\geq(1-r_L)\beta,
\]
then $d\geq r_H$ and the expression in braces is weakly decreasing in $\pi$.
Unless the two regimes coincide, the one-crossing result survives. If the
condition fails, a global one-crossing claim is false in general.

\subsection{Continuous quality}

Let $q$ have any integrable distribution and let
$x(q)=\Prb(C\mid q)$, with $m=\E[x(q)]\in(0,1)$. Then
\begin{equation}
 \E[q\mid C]-\E[q\mid N]
 =\frac{\Cov(q,x(q))}{m(1-m)}. \label{eq:covariance}
\end{equation}
Indeed, the difference between
$\E[qx(q)]/m$ and
$\{\E[q]-\E[qx(q)]\}/(1-m)$ equals the right-hand side. The binary model is
therefore the pass--fail version of a general covariance identity.

\subsection{Selective entry}

Let submitting cost $k>0$ and private assessments have distribution $F$. For a
candidate cutoff $t$, authors with $\theta\geq t$ submit. Define
\[
 M_C(t)=\int_t^1s(\theta)\,dF(\theta),\qquad
 Q_C(t)=x_H\int_t^1\theta\,dF(\theta),\qquad
 \bar\theta=\int_0^1\theta\,dF(\theta).
\]
The premium is
\begin{equation}
 \Pi(t)=\frac{Q_C(t)}{M_C(t)}
 -\frac{\bar\theta-Q_C(t)}{1-M_C(t)}, \label{eq:entrypremium}
\end{equation}
and an interior cutoff satisfies
\begin{equation}
 s(t)[\Pi(t)-f]=k. \label{eq:entrycutoff}
\end{equation}
With fixed intensive-review capacity $A$, coverage becomes
$p(t)=\min\{1,A/[1-F(t)]\}$. Holding the net premium fixed, strict
verification gives low-assessment authors a smaller chance of certification
and therefore induces stronger positive selection. This composition effect
coexists with the baseline outside-pool effect and may reverse its ranking near
the zero-cost crossing. The main
proposition should consequently be read as the direct certification effect for
a fixed completed-paper cohort, not as an unconditional ranking of optimal
journal policies.

\section*{Data availability}

This article is a theoretical contribution and uses no empirical data. The
computational scripts that verify the analytical results and generate the two
figures recompute every reported quantity from the closed-form expressions in
the text. They are openly available in a public repository; a persistent
identifier will be added before publication.

\bibliographystyle{aea}
\bibliography{references}

\end{document}